\newcommand{\BE}{\begin{eqnarray}}
\newcommand{\EE}{\end{eqnarray}}
\newcommand{\be}{\begin{eqnarray}}
\newcommand{\ee}{\end{eqnarray}}
\newcommand{\ben}{\begin{eqnarray*}}
\newcommand{\een}{\end{eqnarray*}}
\newcommand{\ba}{\begin{array}}
\newcommand{\ea}{\end{array}}
\newcommand{\BLN}{\begin{align*}}
\newcommand{\ELN}{\end{align*}}
\newcommand{\m}{\mbox}
\newcommand{\cd}{\cdot}
\newcommand{\f}{\frac}
\newcommand\disp{\displaystyle}
\newcommand{\fig}[1]{Figure \ref{#1}}
\newcommand{\Ai}{\operatorname{Ai}}
\newcommand{\Li}{\operatorname{Li}}
\begin{document}
\title{Uniform asymptotics of area-weighted Dyck paths}
%\subtitle{Do you have a subtitle?\\ If so, write it here}
\author{Nils Haug\inst{1} \and Thomas Prellberg\inst{1}%
}                     % Do not remove
\institute{School of Mathematical Sciences, Queen Mary University of London, London, E1 4NS}
\date{Received: date / Accepted: date}

\communicated{}
\maketitle
\begin{abstract}
Using the generalized method of steepest descents for the case of two coalescing saddle points, we derive an asymptotic expression for the bivariate generating function of Dyck paths, weighted according to their length and their area in the limit of the area generating variable tending towards 1. The result is valid uniformly for a range of the length generating variable, including the tricritical point of the model.
\end{abstract}
\section{Introduction}
\label{introduction}

A Dyck path is a trajectory of a directed random walk on the two-dimensional square lattice above the diagonal $y=x$, starting at the origin and ending on this diagonal. More precisely, the random walk starts at the point $(0,0)$ and, from any given point $(x,y)$, the random walker can only step towards $(x+1,y)$ and $(x,y+1)$. Steps from $(x,y)$ to $(x-1,y)$ or to $(x,y-1)$ are forbidden. Furthermore, each point on the trajectory $(x,y)$ must satisfy $x\leq y$, which means that the random walker always stays above the main diagonal $x=y$, and its final position $(x,y)$ must be on the main diagonal. \fig{fig:dp_01} shows an example of a Dyck path as it is usually drawn, with the lattice oriented such that the main diagonal lies horizontally in the image. As a possible physical system which can be described by Dyck paths, one can think of two-dimensional vesicles attached to a surface \cite{Prellberg13}, or of discrete trajectories of charged particles moving in an external magnetic field \cite{Bulycheva14}.\\
Since we will only deal with Dyck paths in the following, we will refer to them simply as ``paths''.

\begin{figure}[ht]
\centering \includegraphics[width=0.75\textwidth]{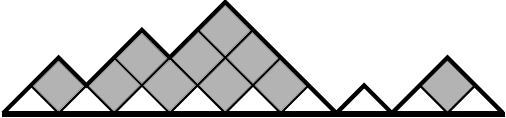}
\caption{A Dyck path of area $m=10$ and length $n=18$. The shaded squares have unit area. The white triangles on the bottom do not contribute to the area.}
\label{fig:dp_01}
\end{figure}

The \emph{length} of a path is the number of steps it consists of. Since the number of horizontal steps must equal the number of vertical steps, it follows that the length of a path is always even. Further, the \emph{area} of a path is the number of entire unit squares enclosed between the trajectory and the main diagonal. For example, the path in \fig{fig:dp_01} has length $n=18$ and area $m=10$ (occasionally an alternative definition of the area under a path as the number of triangular plaquettes enclosed by the trajectory and the main diagonal is used, see e.g. \cite{Flajolet80}).\\

The generating function of paths of length $2n$, weighted according to their area is defined as 
\be 
	Z_n(q) = \sum_{m=1}^\infty c_{m,n}\,q^m,
\label{eq:dp_01}
\ee 

\noindent where $c_{m,n}$ is the number of paths of length $2n$ and area $m$ and $q$ is the weight associated to the area. In physical terms, one can interpret $Z_n(q)$ as the canonical partition function of two-dimensional surface-attached vesicles with perimeter $4n$ and area-fugacity $q$. Note that the maximum area of a path of given finite length is bounded, therefore the sum on the RHS of \eqref{eq:dp_01} is finite for finite values of $n$. As a physical consequence, phase transitions can only occur in the thermodynamic limit $n\to\infty$.

One can also define the generating function of paths of area $m$, weighted with respect to their length, as 
\be 
Q_m(t) = \sum_{n=0} c_{m,n}\,t^n,
\label{eq:dp_015}
\ee
where $t$ is the weight conjugate to the length. The generating function of paths weighted according to both their area \emph{and} their length is defined as
\be 
	G(t,q) = \sum_{n=1}^\infty\sum_{m=1}^\infty c_{m,n}\,q^m\,t^n.
\label{eq:dp_02}
\ee 

\noindent With \eqref{eq:dp_01}, this can be rewritten as
\be 
G(t,q) = \sum_{n=1}^\infty Z_n(q)\,t^n,
\label{eq:dp_03}
\ee 
and with \eqref{eq:dp_015}, we can write 
\be 
G(t,q) = \sum_{m=1}^\infty G_m(t)\,q^m.
\ee
By a standard factorization argument \cite{Flajolet09}, one obtains the functional equation 
\be 
G(t,q) = 1 + t\,G(t,q) G(qt,q),
\label{eq:dp_05}
\ee 
which can be solved by using the ansatz
\be 
G(t,q) = \f{H(qt)}{H(t)}.
\label{eq:dp_07}
\ee
Here,
\be 
H(t) = \sum_{n=0}^\infty \f{q^{n^2-n}(-t)^n} {(q;q)_n},
\label{eq:dp_08}
\ee
and we have used the standard notation for the q-Pochhammer symbol,
\be 
(z;q)_n = \prod_{k=0}^{n-1} (1-zq^k),
\label{eq:dp_09}
\ee 
which is a q-generalization of the Pochhammer symbol. The function $H(qt)=\Ai_q(t)$ is a q-Airy function \cite{Ismail05}.

\noindent For $q = 1$, we obtain the generating function of the Catalan numbers (see e.g. \cite{Aigner07}),
\be 
G(t,1) = \f{1}{2t}(1-\sqrt{1-4t}).
\label{catalangf}
\ee 

\begin{figure}[ht]
\centering \includegraphics[width=0.35\textwidth]{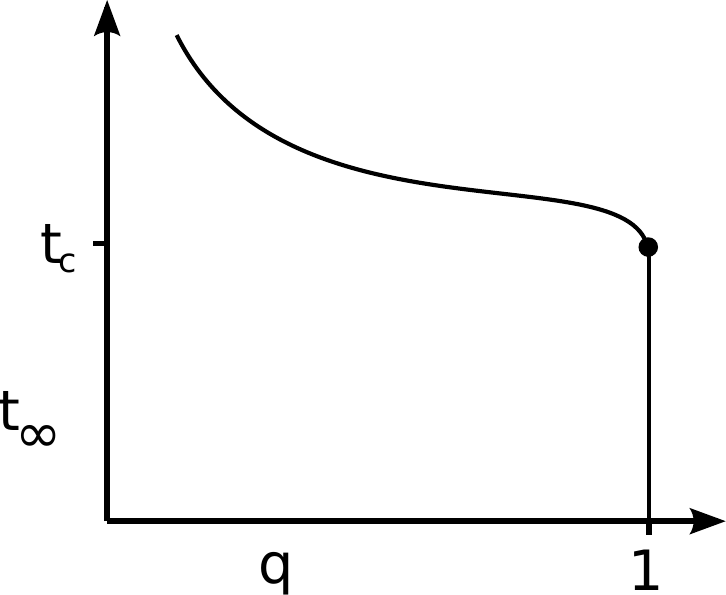}
\caption{The qualitative behaviour of the radius of convergence $t_\infty$ of $G(t,q)$ as a function of $q$. The small circle marks the tricritical point.}
\label{fig:dp_02}
\end{figure}

In \fig{fig:dp_02}, we show how the radius of convergence $t_\infty$ of the series on the RHS of Eq.\eqref{eq:dp_03} behaves qualitatively as a function of $q$. This picture, which is also called the ``phase diagram'' of the system, is typical for lattice polygon models \cite{Richard09}. The radius of convergence is defined by a decreasing line of pole singularities for $q<1$ and zero for $q>1$. From (\ref{catalangf}), the value of $t_\infty$ for $q=1$ can be deduced to be $t_c=1/4$. The point $(t,q)=(t_c,1)$ is called the \emph{tricritical point} of the model (see e.g. \cite{VanRensburg00}) and the area below $t_\infty(q)$ is called the \emph{finite size region}.\\

In \cite{Flajolet80}, the general form of continued fraction expressions for generating functions of Dyck and Motzkin paths\footnote{Motzkin paths are a generalisation of Dyck paths.} has been discussed. In particular, 
\be 
G(t,q) = \f{1}{1-\disp{\f{t}{1-\disp{\f{tq}{1-\disp{\f{tq^2}{1-\disp{\f{tq^3}{1-\dots}}}}}}}}}.
\ee
This expression enables us to continue $G(t,q)$ analytically beyond the finite size region.\\

The asymptotic behaviour of $G(t,q)$ for $q\to 1^-$ as one approaches the tricritical point has so far not been derived rigorously. The aim of this paper is to close this gap by rigorously deriving an asymptotic expression for the generating function $G(t,q)$ in the limit $q\to 1^-$ which is valid uniformly for a range of values of $t$ including the critical point $t_c$. A similar calculation has been carried out in \cite{Prellberg95} for staircase polygons.\\

\noindent Our main result is given in Proposition \ref{prop:dp_01} and Corollary \ref{Corollary:dp_01}.\\

\section{Results}
\label{results}

According to Eq.\eqref{eq:dp_07}, the function $G(t,q)$ is given as a quotient of two alternating q-series. In order to obtain its asymptotic behaviour, we first derive the asymptotic behaviour of both the enumerator and the denominator separately. Taking the fraction of the two obtained expressions will then lead us to the asymptotic behaviour of $G(t,q)$.  We will start with the asymptotic expansion of $H(t)$.

\subsection{Uniform asymptotic expansion of $H(t)$}
\label{asymptotic_H(t)}

The first step in our calculation is to express $H(t)$ as a contour integral. We prove

\begin{lemma}{}

For complex $t$ and $0 < q < 1$,
\BE
H(t) = \f{(q;q)_\infty}{2\pi i}\int_{C}\f{z^{(1+\log_q z)/2-\log_q t}}{(z;q)_\infty} dz,
\label{eq:dp_12}
\EE
where $C$~is a contour as shown in \fig{fig:dp_03}b.

\label{Lemma:dp_01}
\end{lemma}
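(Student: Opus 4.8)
The plan is to evaluate the right-hand side of \eqref{eq:dp_12} by the residue theorem and to show that its poles reproduce, one by one, the terms of the defining series \eqref{eq:dp_08}. Writing the multivalued factor as
\[
z^{(1+\log_q z)/2-\log_q t}=z^{1/2}\exp\!\Big(\f{(\ln z)^2}{2\ln q}-\f{\ln t}{\ln q}\,\ln z\Big),
\]
I would fix the principal branch of $\ln z$ with a cut along the negative real axis, so that the integrand $f(z)=z^{(1+\log_q z)/2-\log_q t}/(z;q)_\infty$ is meromorphic in the cut plane. By \eqref{eq:dp_09} its only singularities there are the simple poles of $1/(z;q)_\infty$ at $z=q^{-n}$, $n=0,1,2,\dots$, all lying on the positive real axis, and the contour $C$ of \fig{fig:dp_03}b is taken to enclose exactly these poles.

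First I would compute the residue of $1/(z;q)_\infty$ at $z=q^{-n}$. Only the factor $1-zq^{n}$ vanishes there, so differentiating the product (which brings down a factor $-q^n$) and collecting the remaining factors with $k<n$, contributing $(-1)^n q^{-n(n+1)/2}(q;q)_n$, and with $k>n$, contributing $(q;q)_\infty$, gives
\[
\operatorname{Res}_{z=q^{-n}}\f{1}{(z;q)_\infty}=\f{(-1)^{n+1}q^{(n^2-n)/2}}{(q;q)_n\,(q;q)_\infty}.
\]
Evaluating the numerator at $z=q^{-n}$, where $\log_q z=-n$, yields $q^{(n^2-n)/2}t^n$, so that $\operatorname{Res}_{z=q^{-n}}f=-\f{1}{(q;q)_\infty}\,\f{q^{n^2-n}(-t)^n}{(q;q)_n}$. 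Summing over $n$ produces precisely $-H(t)/(q;q)_\infty$ by \eqref{eq:dp_08}. With the orientation of \fig{fig:dp_03}b, encircling the poles in the negative sense, the residue theorem then gives $\f{1}{2\pi i}\int_C f=-\sum_n \operatorname{Res}_{z=q^{-n}}f=H(t)/(q;q)_\infty$, which rearranges to \eqref{eq:dp_12}.

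The main obstacle is to make the residue summation rigorous, since the poles accumulate at infinity and $C$ is unbounded. Here the Gaussian factor $\exp((\ln z)^2/(2\ln q))$ is decisive: because $\ln q<0$, it decays super-exponentially as $z\to 0^+$ and as $z\to+\infty$ along the real axis, and stays controlled for bounded $\arg z$. I would use this to show that the integral along $C$ converges absolutely, that the connecting arcs near $0$ and near $\infty$ contribute nothing, and that the tail of the residue series vanishes: estimating $f$ on a contour pushed out beyond the $N$th pole, the numerator is $O(\exp(-(\ln q^{-N})^2/(2|\ln q|)))$, which overwhelms the reciprocal of $(z;q)_\infty$ provided one also establishes a uniform lower bound for $|(z;q)_\infty|$ on arcs lying midway between consecutive poles. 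Supplying that lower bound, and thereby controlling the pushed-out remainder as $N\to\infty$, is the technical heart of the argument; the algebraic residue computation above is then routine.
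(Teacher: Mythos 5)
Your proposal is correct and takes essentially the same approach as the paper: the same residue identity at $z=q^{-n}$ (with the identical value $(-1)^{n+1}q^{(n^2-n)/2}/[(q;q)_n(q;q)_\infty]$), the same truncation by arcs of radius $q^{-N-1/2}$ lying midway between consecutive poles, and the same competition between the Gaussian numerator bound $O(q^{N^2/2}|t|^N)$ and a lower bound for $|(z;q)_\infty|$ on those arcs --- the paper merely runs the argument in the reverse direction, expressing the partial sums of $H(t)$ as integrals over closed contours $C_N$ and letting $N\to\infty$, which is a difference of presentation, not of substance. The one step you defer as the ``technical heart'' is exactly what the paper supplies in a few lines: on the arc, $|1-re^{i\varphi}|\geq|1-r|$ reduces everything to a real product, which splits into a convergent $N$-independent factor times $\prod_{n=1}^{N}q^{-1/2-n}$, giving $|(z;q)_\infty|\geq c_1\,q^{-N^2/2-N}$ and hence an arc contribution $O(|q|^{N^2}|t|^N)\to 0$.
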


\begin{figure}[htbp]
\centering   \subfigure[]{
    \label{fig:dp_03a}
\includegraphics[width=0.3\textwidth]{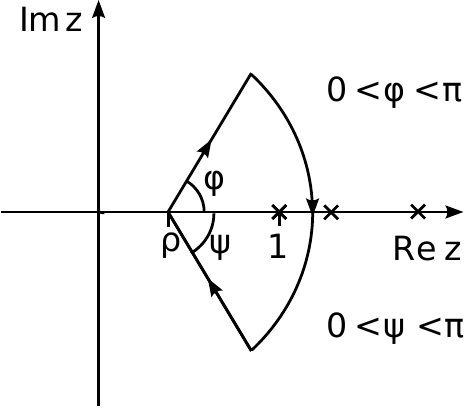}
  }\hspace{1.5cm}
\centering   \subfigure[]{
    \label{fig:dp_03b}
\centering \includegraphics[width=0.3\textwidth]{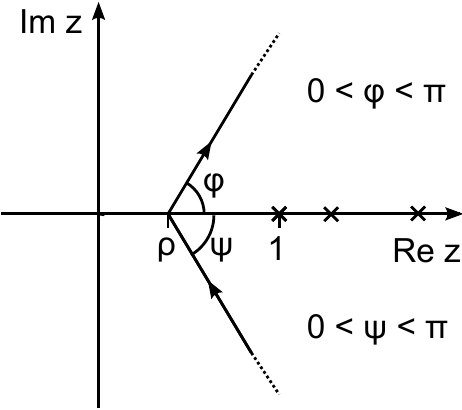}
  }
  \caption{The contours $C_1$ (a) and $C$ (b).}
  \label{fig:dp_03}
\end{figure}

\begin{proof}
For complex $q \neq 0$ and $n\in \mathbb{N}_0$, we have
\be
\disp \f{(-1)^nq^{n\choose 2}}{(q;q)_n(q;q)_\infty} = -\m{Res}\,\left[(z;q)_\infty^{-1};z=q^{-n}\right],
\label{eq:dp_13}
\ee
from which it follows that
\be
\f{(-t)^n q^{n^2-n}}{{(q;q)_n(q;q)_\infty}}
=-\m{Res}[\frac{z^{(1+\log_q z)/2-\log_q t}}{(z;q)_\infty};z=q^{-n}].
\label{eq:dp_14}
\ee
Suppose now that $0<q<1$. Then the contour $C_N = C_N^1 \cup C_N^2 \cup C_N^3$, where
\be \left.
\begin{array}{rcl}
C_N^1&=&\big\{ \rho + \lambda e^{-i\psi}~|~0 < \lambda < q^{-N-1/2}\big\},\vspace*{2mm}\\
C_N^2&=&\big\{ \rho + \lambda e^{i\varphi}~|~0 < \lambda < q^{-N-1/2}\big\},\vspace*{2mm}\\
C_N^3&=&\big\{ \rho + q^{-N-1/2} e^{i \theta}~|~-\psi < \theta < \varphi \big\}
\end{array}
\label{C_N}
\right\},
\ee 
$0<\rho<1$ and $(\varphi,\psi)\,\in~]\,0,\pi\,[^2$, surrounds exactly the $N$ leftmost singularities of the integrand on the RHS of Eq.\eqref{eq:dp_12} -- see \fig{fig:dp_03}a . We can therefore write
\be 
\sum_{n=0}^N \f{q^{n^2-n}(-t)^n} {(q;q)_n} = \f{(q;q)_\infty}{2\pi i}\oint_{C_N} \f{z^{(1+\log_q z)/2-\log_q t}}{(z;q)_\infty} dz.
\label{eq:dp145}
\ee
where the integration is performed in clockwise sense, as indicated by the arrows in \fig{fig:dp_03}a. Combining \eqref{eq:dp_08} and \eqref{eq:dp145}, we obtain
\BE
H(t) = \disp \lim_{N\to\infty} \f{(q;q)_\infty}{2\pi i}\oint_{C_N} \f{z^{(1+\log_q z)/2-\log_q t}}{(z;q)_\infty} dz.
\label{eq:dp_15}
\EE

It is left to show that in the limit $N\to\infty$, the contribution of the circle segment $C_N^3$ to the contour integral (\ref{eq:dp_15}) vanishes, such that the contour $C_N$ can be replaced by the contour shown in \fig{fig:dp_03}b.\\

\noindent On $C_N^3$, we can estimate the denominator of the integrand on the RHS of \eqref{eq:dp_15} as
\begin{multline}
\big|(z,q)_\infty\big| = \disp \Big|\prod_{n=0}^\infty (1-q^{-N-1/2+n}e^{i\varphi})\Big| \geq \disp \Big|\prod_{n=0}^\infty (1-q^{-N-1/2+n})\Big| \\
= \disp \Big|\prod_{n=0}^\infty (1-q^{-1/2+n})\Big|\cd \Big|\prod_{n=1}^{N} (1-q^{-1/2-n})\Big| \geq c_1~\disp \Big|\prod_{n=1}^{N} q^{-1/2-n}\Big|
= \disp c_1\,\big| q^{-N^2/2-N}\big|,
\label{eq:dp_16}
\end{multline}
where $c_1$~is a constant independent of $N$. \\
\noindent Furthermore, the absolute value of the enumerator has for $z\in C_N^3$ the upper bound
\BE
|z^{(1+\log_q z)/2-\log_q t}| \leq c_2~|q^{N^2/2}| |t|^{N},
\label{eq:dp_17}
\EE
where $c_2$ is another constant independent of $N$. Therefore, we can estimate
\be
		\f{1}{2\pi i}\oint_{C_N^3} \f{z^{(1+\log_q z)/2-\log_q t}}{(z;q)_\infty} dz \leq c_3\,|q|^{N^2}|t|^{N},
\label{eq:dp_18}
\ee
where $c_3$ is a third constant independent of $N$. Since the expression on the right hand side tends to zero as $N\to\infty$ for $q<1$, it follows that
\be 
\lim_{N\to\infty} \f{1}{2\pi i}\oint_{C_N} \f{z^{(1+\log_q z)/2-\log_q t}}{(z;q)_\infty} dz=\f{1}{2\pi i}\int_{C}\f{z^{(1+\log_q z)/2-\log_q t}}{(z;q)_\infty} dz,
\label{eq:dp_19}
\ee
where 
\be 
C = \big\{ \rho + \lambda e^{-i\psi}~|~0<\lambda<\infty\big\}\cup\big\{ \rho + \lambda e^{i\varphi}~|~0<\lambda<\infty\big\},
\label{eq:C}
\ee
and where the integration is carried out as indicated by the arrows in \fig{fig:dp_03}b. Combining (\ref{eq:dp_19}) with (\ref{eq:dp_15}), we obtain (\ref{eq:dp_12}).

\end{proof}

\noindent In \cite{Prellberg95} it was shown by applying the Euler-Maclaurin summation formula that
\be 
\ln (z;q)_\infty = -\f{1}{\ln(q)}\Li_2(z)+\f{1}{2}\ln(1-z)+\ln(q)\,R(z,q),
\label{eq:dp_20}
\ee 
where the remainder satisfies the non-uniform bound
\be 
|R(z,q)| \leq \f{1}{6}\left(\ln|1-z|+\f{\m{Re}(z)}{\m{Im}(z)}\arctan\f{\m{Im}(z)}{1-\m{Re}(z)}\right).
\label{eq:dp_21}
\ee 
Here, $\Li_2$ denotes the Euler dilogarithm \cite{NIST}, which can be defined as
\be 
\Li_2(z) = -\int_0^z \f{\ln(1-s)}{s}ds.
\ee
Combining \eqref{eq:dp_12} with \eqref{eq:dp_20} and defining
\begin{subequations}
\BE
	f(z,t)&=&\ln(t)\ln(z)+\Li_2(z)-\f{1}{2}\ln(z)^2,\vspace*{2mm}\label{eq:dp_17a}\\	
	g(z)&=&\disp \left(\f{z}{1-z}\right)^{1/2},\label{eq:dp_17b}
	\label{eq:dp_23}
\EE
\end{subequations}
we can write
\BE
H(t) = \f{(q;q)_\infty}{2\pi i}\int_{C} \exp\left(\f{1}{\epsilon}f(z,t)+\epsilon R(z,q) \right)g(z)dz,
\label{eq:dp_22}
\EE
where $\epsilon=-\ln(q)$. \\

The function $f(z,t)$ is analytic for $\arg(z)<\pi$ and $\arg(1-z)<\pi$ and has the two saddle points
\be 
	z_1(t) = \f 12\left(1+\sqrt{1-4t}\right)\quad;\quad z_2(t) = \f 12\left(1-\sqrt{1-4t}\right)~,
	\label{eq:dp_24}
\ee 
which coalesce for $t=t_c= 1/4$.\\

\noindent From the identity
\be 
\Li_2(\lambda e^{i\phi}) = -\f{1}{2}\ln(-\lambda e^{i\phi})^2-\f{\pi^2}{6} -\Li_2(\f{1}{\lambda}e^{-i\phi})
\label{eq:dp_26}
\ee
(see e.g. \cite{Maximon03}), we obtain
\begin{lemma}

For $0 < |\phi| \leq \pi$,
\begin{subequations}
\be
\Li_2(\lambda e^{i\phi})) &\sim & -\f{1}{2}\ln(-\lambda e^{i\phi})^2,\label{eq:dp_25a}\\
\emph{Re}[\Li_2(\lambda e^{i\phi}))] &\sim &  -\f{1}{2}\ln(\lambda)^2,\label{eq:dp_25b}\\
\emph{Im}[\Li_2(\lambda e^{i\phi}))] &\sim &  -\f{1}{2}\emph{Im}[(\ln(-\lambda e^{i\phi})^2]\label{eq:dp_25c}
\label{eq:dp_25}
\ee
\end{subequations}
as $\lambda \to \infty$.
\label{Lemma:dp_02}
\end{lemma}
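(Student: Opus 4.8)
The plan is to read all three statements straight off the inversion relation \eqref{eq:dp_26}, using the fact that as $\lambda\to\infty$ the argument of the dilogarithm on its right-hand side is driven to the origin. First I would recall that $\Li_2$ is analytic in a neighbourhood of $0$, with the convergent expansion $\Li_2(w)=\sum_{k\geq 1}w^k/k^2$ for $|w|\leq 1$; hence $\Li_2(0)=0$ and $|\Li_2(w)|\leq \Li_2(|w|)=O(|w|)$ as $w\to 0$. Setting $w=\lambda^{-1}e^{-i\phi}$ gives $\Li_2(\lambda^{-1}e^{-i\phi})=O(\lambda^{-1})\to 0$, uniformly in $\phi$. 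Thus in \eqref{eq:dp_26} both the constant $-\pi^2/6$ and the dilogarithm term remain bounded as $\lambda\to\infty$ (indeed the latter vanishes), while $-\tfrac12\ln(-\lambda e^{i\phi})^2$ is the only term that grows.

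Next I would make that growth explicit by fixing the branch. For $0<|\phi|\leq\pi$ the principal logarithm yields $\ln(-\lambda e^{i\phi})=\ln\lambda+i\theta$, with $\theta=\phi-\pi$ when $\phi>0$ and $\theta=\phi+\pi$ when $\phi<0$, so in every case $\theta\in(-\pi,\pi]$ is bounded. Squaring,
\[
\ln(-\lambda e^{i\phi})^2=(\ln\lambda)^2-\theta^2+2i\theta\ln\lambda,
\]
whose real part grows like $(\ln\lambda)^2$ and whose imaginary part grows (for $\phi\neq\pi$) like $\ln\lambda$. Since the two remaining contributions in \eqref{eq:dp_26} are $O(1)$, they are of lower order in modulus than $-\tfrac12\ln(-\lambda e^{i\phi})^2$, whose modulus tends to infinity; taking the quotient establishes \eqref{eq:dp_25a}.

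For the real and imaginary parts I would simply split \eqref{eq:dp_26} into real and imaginary parts. The bounded contribution $-\pi^2/6-\operatorname{Re}\Li_2(\lambda^{-1}e^{-i\phi})$ is negligible against $-\tfrac12\operatorname{Re}\ln(-\lambda e^{i\phi})^2=-\tfrac12\big[(\ln\lambda)^2-\theta^2\big]\sim-\tfrac12(\ln\lambda)^2$, which gives \eqref{eq:dp_25b}. Likewise, since $-\pi^2/6$ is real, the only correction to the imaginary part is $-\operatorname{Im}\Li_2(\lambda^{-1}e^{-i\phi})=O(\lambda^{-1})$, which is negligible against $-\tfrac12\operatorname{Im}\ln(-\lambda e^{i\phi})^2=-\theta\ln\lambda$, giving \eqref{eq:dp_25c}.

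The calculation is routine, and I expect no genuine obstacle beyond bookkeeping: the heavy lifting is done entirely by the inversion formula \eqref{eq:dp_26}, which converts the large-$\lambda$ behaviour into the small-argument behaviour of $\Li_2$. The only points requiring care are the consistent choice of the principal branch across the sign of $\phi$ (so that $\theta$ stays bounded and the leading terms come out with the correct signs), and the degenerate endpoint $\phi=\pi$, where $-\lambda e^{i\phi}=\lambda$ is real and positive and both sides of \eqref{eq:dp_25c} vanish identically, so that the imaginary-part statement holds trivially there.
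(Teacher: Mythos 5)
Your proposal is correct and follows exactly the paper's route: the paper derives the lemma directly from the inversion formula \eqref{eq:dp_26}, with the dilogarithm term on the right vanishing and the constant remaining bounded as $\lambda\to\infty$, so that $-\tfrac12\ln(-\lambda e^{i\phi})^2$ dominates. Your careful handling of the principal branch and of the degenerate case $\phi=\pi$ supplies details the paper leaves implicit, but the argument is the same.
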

\noindent Consequently we have
\begin{corollary}

For complex $t$ and $0 < |\phi| < \pi$,
\be 
f(\lambda e^{i\phi},t) \sim -\ln(\lambda)^2-i\psi \ln(\lambda) \quad\emph{as}~\lambda\to\infty,
\label{eq:dp_27}
\ee
where $\psi = 2\phi+\pi$ for $\phi < 0$ and $\psi = 2\phi-\pi$ for $\phi>0$. %where the signs have to be chosen such that $-\pi <2\phi\pm \pi < \pi$.

\label{Corollary:dp_01}

\end{corollary}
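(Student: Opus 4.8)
The plan is to substitute $z=\lambda e^{i\phi}$ directly into $f(z,t)=\ln(t)\ln(z)+\Li_2(z)-\f{1}{2}\ln(z)^2$ and to analyse the three summands separately as $\lambda\to\infty$ with $\phi$ held fixed. Writing $\ln(z)=\ln(\lambda)+i\phi$, the term $-\f{1}{2}\ln(z)^2$ is purely elementary, the term $\ln(t)\ln(z)$ is linear in $\ln(\lambda)$, and the dilogarithm is supplied by Lemma \ref{Lemma:dp_02}: by \eqref{eq:dp_25a} one has $\Li_2(\lambda e^{i\phi})\sim-\f{1}{2}\ln(-\lambda e^{i\phi})^2$ as $\lambda\to\infty$. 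Thus the whole asymptotic reduces to expanding two squared logarithms and collecting terms by order in $\ln(\lambda)$.

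The decisive step is the branch choice in $\ln(-\lambda e^{i\phi})$. Since the argument must be reduced to the principal range $(-\pi,\pi]$, I would write $-\lambda e^{i\phi}=\lambda e^{i(\phi-\pi)}$ for $\phi>0$ and $-\lambda e^{i\phi}=\lambda e^{i(\phi+\pi)}$ for $\phi<0$, so that $\ln(-\lambda e^{i\phi})=\ln(\lambda)+i(\phi\mp\pi)$ with the upper sign taken for $\phi>0$. This branch flip across $\phi=0$ is exactly what produces the two cases $\psi=2\phi-\pi$ and $\psi=2\phi+\pi$ in the statement; tracking it correctly is the one genuinely delicate point in the argument.

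Collecting terms, both $-\f{1}{2}\ln(z)^2$ and the leading part $-\f{1}{2}\ln(-\lambda e^{i\phi})^2$ of $\Li_2(z)$ contribute $-\f{1}{2}\ln(\lambda)^2$, so their sum yields the dominant real term $-\ln(\lambda)^2$. The linear-in-$\ln(\lambda)$ imaginary parts add as $-i\phi\ln(\lambda)-i(\phi\mp\pi)\ln(\lambda)=-i(2\phi\mp\pi)\ln(\lambda)=-i\psi\ln(\lambda)$, which is the advertised subleading term. Everything else is of strictly lower order: the constants $\f{\phi^2}{2}$ and $\f{(\phi\mp\pi)^2}{2}$, the constant $-\pi^2/6$ and the vanishing $\Li_2(\lambda^{-1}e^{-i\phi})$ inherited from \eqref{eq:dp_26}, and the constant part $i\phi\ln(t)$ of $\ln(t)\ln(z)$ are all $O(1)=o(\ln(\lambda))$.

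The main obstacle I anticipate is not computational but one of precision in the meaning of $\sim$. At order $\ln(\lambda)$ the only term competing with $-i\psi\ln(\lambda)$ is the genuine contribution $\ln(t)\ln(\lambda)$, which is $t$-dependent and need not be purely imaginary; it is not absorbed by the $O(1)$ errors above. The stated relation is therefore to be read as isolating the dominant real growth $-\ln(\lambda)^2$ together with the $t$-independent leading imaginary growth $-i\psi\ln(\lambda)$, which is all that is required downstream, where the point is merely that $\m{Re}\,f(\lambda e^{i\phi},t)\sim-\ln(\lambda)^2\to-\infty$ forces the decay of the integrand along the contour $C$. I would close by verifying that the additive error implicit in the $\sim$ of Lemma \ref{Lemma:dp_02} is $O(1)$, hence corrupts neither the $\ln(\lambda)^2$ nor the $\ln(\lambda)$ coefficient, so that the estimate holds for each fixed $\phi$ with $0<|\phi|<\pi$.
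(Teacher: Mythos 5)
Your proof is correct and is essentially the paper's own (implicit) argument: the paper offers no separate proof beyond ``Consequently we have,'' i.e.\ exactly your route of substituting $z=\lambda e^{i\phi}$ into $f$, invoking Lemma \ref{Lemma:dp_02} via \eqref{eq:dp_26}, tracking the branch flip in $\ln(-\lambda e^{i\phi})$ across $\phi=0$, and collecting orders of $\ln(\lambda)$. Your closing remark is also well taken: the uncancelled $\ln(t)\ln(\lambda)$ term really is of the same order as $-i\psi\ln(\lambda)$, so the corollary's two-term form must be read in the ratio sense (or as isolating the $t$-independent growth), which is all that is used downstream where only $\mathrm{Re}\,f\sim-\ln(\lambda)^2\to-\infty$ matters.
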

The remainder $R(z,q)$ is not uniformly bounded with respect to $z$, therefore it is not immediately clear that it can be neglected in the limit $\epsilon\to 0^+$. However, from the asymptotic behaviour of $f(z,t)$ one can conclude that the tails of the integration contour do not contribute to the asymptotics of the integral. Therefore we have
\begin{lemma}{}

For complex $t$ and $0<q<1$,
\BE
H(t) \sim \f{(q;q)_\infty}{2\pi i} \int_{C} \exp\left(\f 1\epsilon f(z,t) \right) g(z) dz\qquad\m{as}~\epsilon \to 0^+.
\label{eq:dp_28}
\EE

\label{Lemma:dp_03}
\end{lemma}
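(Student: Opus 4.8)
The plan is to recast the claim as an asymptotic equivalence of integrals and to control the difference by splitting the contour. Writing $I(\epsilon):=\int_{C}e^{f(z,t)/\epsilon}g(z)\,dz$, it suffices to show that
\[
\int_{C}e^{f(z,t)/\epsilon}\bigl(e^{\epsilon R(z,q)}-1\bigr)g(z)\,dz=o\bigl(|I(\epsilon)|\bigr)\qquad(\epsilon\to0^+).
\]
First I would fix a large cut-off $\Lambda$ and split $C=C_{\mathrm{in}}\cup C_{\mathrm{out}}$ with $C_{\mathrm{in}}=C\cap\{|z|\le\Lambda\}$ compact and bounded away from the singularity $z=1$, and $C_{\mathrm{out}}=C\cap\{|z|>\Lambda\}$ the two tails. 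I would first deform $C$ (legitimate by Cauchy's theorem, since the integrand is analytic off $z=1$ and the poles $z=q^{-n}$, which all lie to the right of $C$) to a path of steepest descent through the saddle point(s) of $f$. This ensures that $\mathrm{Re}\,f$ is maximised on $C$ at a saddle $z_\ast$ with no cancellation, so that a local Laplace/Airy estimate yields the crude lower bound $|I(\epsilon)|\asymp\epsilon^{\alpha}\exp(\mathrm{Re}\,f(z_\ast)/\epsilon)$ for some $\alpha>0$ (with $\alpha=\tfrac12$ away from $t_c$ and $\alpha=\tfrac13$ at coalescence; only the exponential scale matters below).

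On $C_{\mathrm{in}}$ the bound \eqref{eq:dp_21} shows that $R(z,q)$ is bounded, $|R|\le M_\Lambda$, uniformly in $q$, whence $|e^{\epsilon R}-1|\le\epsilon M_\Lambda e^{\epsilon M_\Lambda}=O(\epsilon)$ uniformly. The central part of the difference is therefore at most $O(\epsilon)\int_{C_{\mathrm{in}}}|e^{f/\epsilon}g|\,|dz|=O(\epsilon)\,|I(\epsilon)|$, which is $o(|I(\epsilon)|)$. The single power of $\epsilon$ gained here comfortably beats the $\epsilon^{\alpha}$ in $I$, so the coalescence of the saddles at $t_c$ does not threaten the estimate.

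The tails are handled by Corollary \ref{Corollary:dp_01}: along each ray $z=\rho+\lambda e^{i\phi}$ one has $\mathrm{Re}\,f(z,t)\sim-\ln(\lambda)^2\to-\infty$, whereas \eqref{eq:dp_21} grows only like $|R(z,q)|\lesssim\ln\lambda$ (the $\arctan$ term is bounded and $\mathrm{Re}(z)/\mathrm{Im}(z)\to\cot\phi$). Consequently
\[
\bigl|e^{f/\epsilon+\epsilon R}g\bigr|\le C\exp\Bigl(-\tfrac{1}{2\epsilon}\ln(\lambda)^2+\epsilon C'\ln\lambda\Bigr),
\]
which after the substitution $u=\ln\lambda$ becomes a Gaussian integral, giving $\int_{C_{\mathrm{out}}}|e^{f/\epsilon+\epsilon R}g|\,|dz|=O(e^{-c/\epsilon})$; the same bound holds with $R$ omitted. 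Choosing $\Lambda$ large enough that $\mathrm{Re}\,f<\mathrm{Re}\,f(z_\ast)-1$ throughout $C_{\mathrm{out}}$ makes this exponentially smaller than $|I(\epsilon)|$, hence again $o(|I(\epsilon)|)$. Summing the two contributions yields \eqref{eq:dp_28}.

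The main obstacle is exactly the non-uniformity of $R$ flagged before the statement: the naive estimate $|\epsilon R|\le\epsilon\sup_{C}|R|$ is useless since $\sup_C|R|=\infty$. The resolution is to let the super-exponential decay $\exp(\mathrm{Re}\,f/\epsilon)\sim\exp(-\ln(\lambda)^2/\epsilon)$ on the tails absorb the at-most-logarithmic growth of $R$, so that $R$ need only be bounded on the compact central piece. A secondary delicate point is to certify that the $C_{\mathrm{in}}$-error, which carries only one power of $\epsilon$, is genuinely small compared with $|I(\epsilon)|$; this rests on the local lower bound $|I(\epsilon)|\gtrsim\epsilon^{\alpha}e^{\mathrm{Re}\,f(z_\ast)/\epsilon}$, which must be obtained from a steepest-descent analysis at the saddle prior to, or independently of, the full uniform expansion established later in the paper.
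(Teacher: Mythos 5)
Your proposal is correct and follows essentially the same route as the paper: the paper's own justification of Lemma \ref{Lemma:dp_03} is exactly that the super-exponential decay $\mathrm{Re}\,f(z,t)\sim-\ln(\lambda)^2$ along the tails (Corollary \ref{Corollary:dp_01}) makes the tail contribution negligible despite the logarithmic growth of $R(z,q)$ there, while on the remaining compact piece of $C$ the bound \eqref{eq:dp_21} keeps $R$ bounded so that $e^{\epsilon R}=1+O(\epsilon)$. If anything, your write-up is more detailed than the paper's brief argument, since you make explicit the lower bound $|I(\epsilon)|\gtrsim\epsilon^{\alpha}e^{\mathrm{Re}\,f(z_\ast)/\epsilon}$ needed to turn these estimates into the relative ($o(|I(\epsilon)|)$) statement that the asymptotic equivalence requires.
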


It was shown in \cite{Chester57} (see also \cite{Bleistein75} and \cite{Wong01}) that for a function $f(z,t)$, which is analytic with respect to both $z$ and $t$ and which has two saddle points $z_1(t)$ and $z_2(t)$, there is a unique transformation $u:z \longmapsto u(z)$, such that
\BE
	f(z,t) = \f{1}{3}u^3-\alpha(t) u+\beta(t),
	\label{eq:dp_37}
\EE 
which is regular and one-to-one in a domain containing $z_1(t)$ and $z_2(t)$ if $t$ lies in some small domain containing $t_c$. Moreover,
\be 
	u\big(z_1(t)\big) = |\alpha(t)^{1/2}|\quad;\quad u\big(z_2(t)\big) = -|\alpha(t)^{1/2}|.
	\label{eq:dp_38}
\ee
Combining \eqref{eq:dp_37} with \eqref{eq:dp_38}, one gets the explicit form
\begin{multline}
u(z) = \left[\left(\f 32 (f(z,t)-\beta)\right)^2+\left(\left(\f 32 (f(z,t)-\beta)\right)^2-\alpha^3\right)^{1/2}\right]^{1/3}+ \\
+ \alpha\left[\left(\f 32 (f(z,t)-\beta)\right)^2+\left(\left(\f 32 (f(z,t)-\beta)\right)^2-\alpha^3\right)^{1/2}\right]^{-1/3},
\label{eq:dp_39}
\end{multline}
where the two parameters are obtained by inserting \eqref{eq:dp_38} into \eqref{eq:dp_37} as
\be 
\left. \begin{array}{rcl}
\disp \alpha(t) & = & \disp \left(\f 34 \left[ f(z_2,t)-f(z_1,t) \right] \right)^{2/3}~,\vspace{2.5mm}\\
\disp \beta(t)  & = & \disp \f 12  \left( f(z_1,t) + f(z_2,t) \right) = \f 12 \ln(t)^2+\f {\pi^2}{6}
\end{array}\right\}.
\label{eq:dp_40}
\ee
Note that for $t\to t_c$, one has $\alpha(t)\sim 1-4t$. From Corollary \ref{Corollary:dp_01} it follows that
\be 
	u(\rho + i\lambda ) \sim \exp\left(\pm i \f{\pi}{3}\right)\left(\f 32 \ln |\lambda| \right)^{1/3},
	\label{eq:dp_41}
\ee
for $\lambda\to\pm\infty$. This leads us to
\begin{lemma}{}
For complex $t$,
\BE
H(t) \sim \f{(q;q)_\infty}{2\pi i}  \int_{C} e^{\disp\f 1\epsilon \left(\f 13 u^3-\alpha u+\beta \right)} g(z(u)) \f{dz}{du} du
\label{eq:dp_42}
\EE
as $\epsilon \to 0^+$, where $C$ is a contour as shown in \fig{fig:dp_03}b with $\varphi=\psi=\pi/3$.

\label{Lemma:dp_04}
\end{lemma}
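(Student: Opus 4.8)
The plan is to derive \eqref{eq:dp_42} from Lemma~\ref{Lemma:dp_03} by carrying out the Chester--Friedman--Ursell change of variables $z\mapsto u(z)$ defined implicitly by \eqref{eq:dp_37}. Substituting $f(z,t)=\frac13u^3-\alpha u+\beta$ into the integrand of \eqref{eq:dp_28} and replacing the integration variable $z$ by $u$ turns $g(z)\,dz$ into $g(z(u))\,\frac{dz}{du}\,du$, which is exactly the integrand appearing in \eqref{eq:dp_42}. Much of the groundwork is already laid: by \cite{Chester57} the map is analytic and one-to-one on a domain $D$ containing both saddle points $z_1(t),z_2(t)$ for all $t$ near $t_c$, and this holds uniformly through the coalescent value $t=t_c$. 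Since $\pt_z f$ and $u^2-\alpha$ vanish to the same order at the saddles (simply when $\alpha\neq0$, doubly when $\alpha=0$), the Jacobian $\frac{dz}{du}$ stays finite and non-zero throughout $D$, so the transformed integrand is analytic there.

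The essential difficulty is that this substitution is licensed only on the bounded domain $D$, whereas the contour $C$ extends to infinity. I would bridge this in two moves. First, by Corollary~\ref{Corollary:dp_01} one has $\text{Re}\,f(z,t)\to-\infty$ along the two tails of $C$, which is what made the tails negligible in Lemma~\ref{Lemma:dp_03}; the same estimate lets me use Cauchy's theorem to deform $C$ so that a compact arc through both saddle points lies inside $D$, with the tails leaving $D$ in directions along which $\text{Re}\,f\to-\infty$. On this compact arc the substitution applies verbatim and sends it to an arc $C'$ in the $u$-plane through the saddle images $u=\pm|\alpha^{1/2}|$.

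It then remains to recognise $C'$ as a piece of the canonical cubic contour and to dispose of the tails. Here I would invoke \eqref{eq:dp_41}: since $\frac13u^3\sim f\to-\infty$ as $z$ recedes along a tail, $u^3$ becomes large and negative, forcing $\arg u\to\pm\pi/3$, so the image leaves every bounded set along the rays $\arg u=\pm\pi/3$. On those rays $\text{Re}(u^3)<0$, so the transformed integrand decays super-exponentially, and the endpoints of $C'$ already lie in this decay region. I may therefore extend and deform $C'$, with exponentially small error, into the full V-shaped contour of \fig{fig:dp_03b} with $\varphi=\psi=\pi/3$; the independence of \eqref{eq:dp_28} from the opening angles of $C$ confirms that nothing is lost. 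Combining the matching core contributions with the separately negligible tails in both representations yields \eqref{eq:dp_42}.

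The step I expect to be the main obstacle is the first one -- reconciling the globally-defined contour with the merely local validity of the Chester--Friedman--Ursell map. The substitution itself and the super-exponential decay along $\arg u=\pm\pi/3$ are routine; the delicate point is to show rigorously, and uniformly as $t\to t_c$ so that the coalescing saddles are treated together, that the portion of the contour outside $D$ may be discarded and that $C'$ extends to the canonical contour without altering the asymptotics. This rests on coupling the analyticity and non-degeneracy of the map on $D$ with the growth of $-\text{Re}\,f$ supplied by Corollary~\ref{Corollary:dp_01}.
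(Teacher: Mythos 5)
Your proposal follows essentially the same route as the paper: transform the integral of Lemma \ref{Lemma:dp_03} by the Chester--Friedman--Ursell change of variables \eqref{eq:dp_37}, and use Corollary \ref{Corollary:dp_01} through the asymptotic relation \eqref{eq:dp_41} to identify the image of $C$ as the V-shaped contour with $\varphi=\psi=\pi/3$. If anything, your handling of the local validity of the map versus the unbounded contour (deforming to a compact arc through the saddles, then extending in the $u$-plane with exponentially small error) is more detailed than the paper's own argument, which passes from \eqref{eq:dp_41} directly to the lemma.
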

\noindent It is possible to write
\be 
	g(z(u))\f{dz}{du} = \sum_{m=0}^\infty (p_m+u q_m)(u^2-\alpha)^m
	\label{eq:dp_43}
\ee
and insert this expansion into \eqref{eq:dp_42}. Interchanging the order of integration and summation, we obtain the asymptotic expansion
\be 
H(t) \sim \f{(q;q)_\infty}{2\pi i}  \sum_{m=0}^\infty \int_{C} (p_m+u q_m)(u^2-\alpha)^m e^{\disp\f 1\epsilon \left(\f 13 u^3-\alpha u+\beta \right)} du.
\label{eq:H_series}
\ee
The two leading coefficients can be obtained from \eqref{eq:dp_37} and \eqref{eq:dp_38}. We get
\be 
2\,g(z_1)\sqrt{\f{\alpha}{f^{\prime\prime}(z_{1})}}=p_0 + \alpha^{1/2} q_0\quad;\quad 2\,g(z_2)\sqrt{\f{\alpha}{f^{\prime\prime}(z_{2})}}=p_0 - \alpha^{1/2} q_0,
\ee
and these two equations can be solved with respect to $p_0$ and $q_0$ respectively to obtain
\be 
p_0 = \left(\f{\alpha}{d}\right)^{\f 14} \left(z_1^{3/2}+z_2^{3/2}\right) \quad;\quad q_0 =  \left(\f{1}{\alpha d}\right)^{\f 14} \left(z_1^{3/2}-z_2^{3/2}\right).                                                              
\label{eq:dp_44}
\ee 
Here, we have set $d=1-4t$.\\

\noindent Inserting \eqref{eq:dp_44} into \eqref{eq:H_series}, we arrive at an asymptotic expression for $H(t)$ in terms of the Airy function
\be 
\Ai(z) = \int_C \exp\left(\f{w^3}{3}-zw\right)dw.
\ee
\noindent This presents the main result of this section, 
\begin{lemma}{}

For complex $t$,
\begin{multline}
H(t) \sim \f{(q;q)_\infty}{2\pi i}\left(\f{1}{\alpha d}\right)^{\f{1}{4}}\exp\left(\disp\f{\beta}{\epsilon}\right) \left(\alpha^{1/2}\epsilon^{1/3}(z_1^{3/2}+z_2^{3/2})\Ai(\alpha\epsilon^{-2/3})+\right.\\
\left.+\epsilon^{2/3}(z_2^{3/2}-z_1^{3/2})\Ai^\prime(\alpha\epsilon^{-2/3})\right)\quad\m{as}~\epsilon \to 0^+.
\label{eq:dp_45}
\end{multline}

\label{Lemma:dp_05}
\end{lemma}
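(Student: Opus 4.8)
The plan is to read off the leading-order term of the asymptotic series \eqref{eq:H_series}, to argue that it is the only term surviving at leading order, and then to evaluate it explicitly against the integral representation of the Airy function. First I would isolate the $m=0$ summand and explain why the terms with $m\geq 1$ are subdominant as $\epsilon\to 0^+$. The key structural fact is that the polynomial phase satisfies $\f{d}{du}\left(\f13 u^3-\alpha u+\beta\right)=u^2-\alpha$, so that $(u^2-\alpha)\,e^{\phi/\epsilon}=\epsilon\,\f{d}{du}e^{\phi/\epsilon}$ with $\phi=\f13 u^3-\alpha u+\beta$. Integrating by parts along $C$ — the boundary contributions vanish because, by \eqref{eq:dp_41}, the contour escapes to infinity in the sectors where $u\to\infty\,e^{\pm i\pi/3}$ and hence $\m{Re}(u^3)\to-\infty$ — each factor $(u^2-\alpha)$ is formally traded for one power of $\epsilon$ acting on the smooth prefactor $p_m+u q_m$. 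This suggests the $m$-th term is $O(\epsilon^m)$ smaller than the $m=0$ term, so only the latter contributes to the leading asymptotics.

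Next I would evaluate the $m=0$ term. Pulling the constant factor $e^{\beta/\epsilon}$ out of the exponential and substituting $u=\epsilon^{1/3}w$ — which, since $\epsilon^{1/3}>0$ is real, preserves the ray angles $\varphi=\psi=\pi/3$ and so carries $C$ onto the standard Airy contour — the exponent becomes $\f13 w^3-\alpha\epsilon^{-2/3}w$. The $m=0$ term then reads
\begin{multline*}
\f{(q;q)_\infty}{2\pi i}\,e^{\beta/\epsilon}\left(\epsilon^{1/3}p_0\int_C e^{\f13 w^3-\alpha\epsilon^{-2/3}w}\,dw\right.\\
\left.+\,\epsilon^{2/3}q_0\int_C w\,e^{\f13 w^3-\alpha\epsilon^{-2/3}w}\,dw\right).
\end{multline*}
With the given representation $\Ai(\zeta)=\int_C e^{w^3/3-\zeta w}\,dw$ and, differentiating under the integral sign, $\Ai'(\zeta)=-\int_C w\,e^{w^3/3-\zeta w}\,dw$, the two integrals evaluate to $\Ai(\alpha\epsilon^{-2/3})$ and $-\Ai'(\alpha\epsilon^{-2/3})$, with $\zeta=\alpha\epsilon^{-2/3}$.

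Finally I would insert the explicit coefficients from \eqref{eq:dp_44}. Writing $\left(\f{\alpha}{d}\right)^{1/4}=\alpha^{1/2}\left(\f{1}{\alpha d}\right)^{1/4}$, the $p_0$-contribution becomes $\left(\f{1}{\alpha d}\right)^{1/4}\alpha^{1/2}\epsilon^{1/3}\big(z_1^{3/2}+z_2^{3/2}\big)\Ai(\alpha\epsilon^{-2/3})$, while $-\epsilon^{2/3}q_0\Ai'$ produces $\left(\f{1}{\alpha d}\right)^{1/4}\epsilon^{2/3}\big(z_2^{3/2}-z_1^{3/2}\big)\Ai'(\alpha\epsilon^{-2/3})$. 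Collecting these and restoring the overall prefactor $\f{(q;q)_\infty}{2\pi i}e^{\beta/\epsilon}$ yields exactly \eqref{eq:dp_45}.

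The hard part will be making the subdominance argument rigorous \emph{uniformly} as $t\to t_c$, that is as the saddles coalesce and $\alpha\sim 1-4t\to 0$. In that regime $\Ai(\alpha\epsilon^{-2/3})$ and $\Ai'(\alpha\epsilon^{-2/3})$ no longer decompose into two separate saddle contributions, so the integration-by-parts estimate above must be controlled uniformly in $\alpha$ on a full neighbourhood of $0$, not merely pointwise for fixed $\alpha\neq 0$; this is precisely the point at which the Chester--Friedman--Ursell construction \cite{Chester57} (as in \cite{Prellberg95}) is required, and it rests on the regularity and injectivity of the map $u(z)$ from \eqref{eq:dp_37}--\eqref{eq:dp_39} holding throughout a fixed $t$-neighbourhood of $t_c$, together with a bound on the truncation remainder measured against the Airy scale $\epsilon^{-2/3}$.
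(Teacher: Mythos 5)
Your proposal is correct and follows essentially the same route as the paper: keep the $m=0$ term of the expansion \eqref{eq:H_series}, rescale $u=\epsilon^{1/3}w$ to identify the Airy integrals $\Ai(\alpha\epsilon^{-2/3})$ and $-\Ai'(\alpha\epsilon^{-2/3})$, and insert the coefficients \eqref{eq:dp_44}. In fact you make explicit several steps the paper leaves implicit (the integration-by-parts argument for the subdominance of the $m\geq 1$ terms, the sign in $\Ai'$, and the uniformity issue near $\alpha=0$ handled by the Chester--Friedman--Ursell construction), and your sign and prefactor bookkeeping reproduces \eqref{eq:dp_45} exactly.
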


\subsection{Uniform asymptotic expansion of $H(qt)$}

The approach used in the last section can be applied in a completely analogous way to $H(qt)$. The function $f(z,t)$ remains the same, whereas now
\be 
	g(z) = \left(\f{1}{z(1-z)}\right)^{1/2}.
	\label{eq:dp_46}
\ee 
This changes the leading coefficients of the expansion towards
\be 
p_0 = \left(\f{\alpha}{d}\right)^{\f 14} \left(z_1^{1/2}+z_2^{1/2}\right) \quad;\quad q_0 =  \left(\f{1}{\alpha d}\right)^{\f 14} \left(z_1^{1/2}-z_2^{1/2}\right),
\label{eq:dp_47}
\ee 
and we obtain

\begin{lemma}{}

For complex $t$,
\begin{multline}
H(qt) \sim \f{(q;q)_\infty}{2\pi i}\left(\f{1}{\alpha d}\right)^{\f{1}{4}}\exp\left(\disp\f{\beta}{\epsilon}\right) \left(\alpha^{1/2}\epsilon^{1/3}(z_1^{1/2}+z_2^{1/2})\Ai(\alpha\epsilon^{-2/3})+\right.\\
\left.+\epsilon^{2/3}(z_2^{1/2}-z_1^{1/2})\Ai^\prime(\alpha\epsilon^{-2/3})\right)\quad\m{as}~\epsilon \to 0^+.
\label{eq:dp_48}
\end{multline}

\label{Lemma:dp_06}
\end{lemma}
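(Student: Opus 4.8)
The plan is to mirror, step by step, the derivation that produced Lemma \ref{Lemma:dp_05}, exploiting the fact that passing from $H(t)$ to $H(qt)$ changes the integrand only marginally. First I would apply Lemma \ref{Lemma:dp_01} with the argument $t$ replaced by $qt$; since $q\in(0,1)$ is fixed and $qt$ is again an arbitrary complex number, the integral representation holds verbatim, giving $H(qt)=\f{(q;q)_\infty}{2\pi i}\int_C z^{(1+\log_q z)/2-\log_q(qt)}/(z;q)_\infty\,dz$. The decisive observation is that $\log_q(qt)=1+\log_q t$, so the exponent of $z$ in the numerator is lowered by exactly one unit relative to the $H(t)$ integrand. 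Writing the integrand, as in \eqref{eq:dp_22}, in the form $g(z)\exp(f(z,t)/\epsilon+\epsilon R(z,q))$, this extra factor $z^{-1}$ is absorbed entirely into the algebraic prefactor, turning $g(z)=(z/(1-z))^{1/2}$ into $g(z)=(z(1-z))^{-1/2}$, exactly \eqref{eq:dp_46}, while leaving $f(z,t)$, and hence the term $\epsilon R(z,q)$, completely unchanged.

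Because $f(z,t)$ is untouched, every ingredient depending only on $f$ carries over without modification: the saddle points $z_1,z_2$ of \eqref{eq:dp_24}, the parameters $\alpha,\beta$ of \eqref{eq:dp_40}, the Chester--Friedman--Ursell cubic transformation $u(z)$ \cite{Chester57}, and the steepest-descent contour with $\varphi=\psi=\pi/3$. In particular, the tail and remainder estimates behind Lemma \ref{Lemma:dp_03} and Lemma \ref{Lemma:dp_04} go through unchanged — indeed they only improve, since the new $g$ decays like $|z|^{-1}$ along the rays of $C$ rather than tending to a nonzero constant — so I would invoke Lemmas \ref{Lemma:dp_02}--\ref{Lemma:dp_04} directly to reduce $H(qt)$ to the cubic-transformed integral \eqref{eq:dp_42} with the new $g$.

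The only genuine computation is the recomputation of the leading expansion coefficients. I would expand $g(z(u))\,dz/du=\sum_m(p_m+uq_m)(u^2-\alpha)^m$ as in \eqref{eq:dp_43} and determine $p_0,q_0$ by evaluating at the two saddles, where $u=\pm\alpha^{1/2}$ annihilates every term with $m\geq1$; this yields the same $2\times2$ linear system as before but with the modified $g$. The simplification that drives the result is that, since $z_1+z_2=1$ and $z_1z_2=t$, one has $1-z_1=z_2$ and $1-z_2=z_1$, so that $g(z_j)=(z_j(1-z_j))^{-1/2}=t^{-1/2}$ at both saddles; compared with the earlier $g(z_j)=(z_j/z_{3-j})^{1/2}$, this contributes an extra factor $z_j^{-1}$, replacing $z_j^{3/2}$ by $z_j^{1/2}$ and giving \eqref{eq:dp_47}. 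Substituting $p_0,q_0$ into the $m=0$ term of the series analogue of \eqref{eq:H_series}, rescaling $u=\epsilon^{1/3}w$, and using $\Ai(x)=\int_C\exp(w^3/3-xw)\,dw$ together with $\int_C w\exp(w^3/3-xw)\,dw=-\Ai'(x)$, produces precisely the $\Ai(\alpha\epsilon^{-2/3})$ and $\Ai'(\alpha\epsilon^{-2/3})$ contributions of \eqref{eq:dp_48}.

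The main obstacle is not any single sharp estimate but the bookkeeping that justifies treating the $m=0$ term as the leading asymptotic contribution: I must confirm that, with the modified $g$, the term-by-term integration of the expansion \eqref{eq:dp_43} remains legitimate and that the $m\geq1$ terms are genuinely of lower order in $\epsilon$ after the rescaling, uniformly for $t$ in a neighbourhood of $t_c$ (equivalently, as $\alpha\epsilon^{-2/3}$ ranges over the relevant region of the Airy argument). Since $g$ differs from the $H(t)$ case only by a bounded analytic factor near the coalescing saddles, the convergence and uniformity arguments already established for Lemma \ref{Lemma:dp_05} transfer directly, and careful tracking of the branches of the square roots entering $p_0,q_0$ is the only point that requires real attention.
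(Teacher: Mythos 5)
Your proposal is correct and follows essentially the same route as the paper, which simply repeats the $H(t)$ analysis with $f(z,t)$ unchanged and $g(z)$ replaced by $\left(1/(z(1-z))\right)^{1/2}$, leading to the modified coefficients \eqref{eq:dp_47} and hence \eqref{eq:dp_48}. In fact you supply details the paper leaves implicit — the origin of the extra $z^{-1}$ via $\log_q(qt)=1+\log_q t$, and the saddle-point identities $1-z_1=z_2$, $z_1z_2=t$ that turn $z_j^{3/2}$ into $z_j^{1/2}$ — so the argument is sound as written.
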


\subsection{Uniform asymptotics of $G(t,q)$}

Combining Lemmas \ref{Lemma:dp_05} and \ref{Lemma:dp_06}, we arrive at
\begin{proposition}

For complex $t$ and $q\to 1^-$,
\be 
G(t,q) \sim \f{\alpha^{1/2}(z_1^{1/2}+z_2^{1/2})\Ai(\alpha\epsilon^{-2/3})+(z_2^{1/2}-z_1^{1/2})\epsilon^{1/3}\Ai^\prime(\alpha\epsilon^{-2/3})}{\alpha^{1/2}(z_1^{3/2}+z_2^{3/2})\Ai(\alpha\epsilon^{-2/3})+(z_2^{3/2}-z_1^{3/2})\epsilon^{1/3}\Ai^\prime(\alpha\epsilon^{-2/3})},
\label{eq:dp_49}
\ee
where $\epsilon = -\ln(q)$,
\be 
\nonumber
\left. \begin{array}{rcl} 
z_1 &=&\f{1}{2}(1+\sqrt{1-4t})\vspace{2mm}\\

z_2  &=&\f{1}{2}(1-\sqrt{1-4t})
\end{array} \right\}
\label{eq:dp_50}
\ee
and $\alpha(t)$ is given by Eq.\eqref{eq:dp_40}.
\label{prop:dp_01}

\end{proposition}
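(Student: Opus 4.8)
The plan is to read off $G(t,q)$ directly from its quotient representation \eqref{eq:dp_07} by dividing the two asymptotic expressions already established in Lemmas~\ref{Lemma:dp_05} and~\ref{Lemma:dp_06}. Forming $H(qt)/H(t)$ from \eqref{eq:dp_48} and \eqref{eq:dp_45}, I would first note that the common prefactor
\[
\frac{(q;q)_\infty}{2\pi i}\left(\frac{1}{\alpha d}\right)^{1/4}\exp\left(\frac{\beta}{\epsilon}\right)
\]
occurs identically in numerator and denominator and so cancels. What remains is a ratio of two two-term Airy combinations; factoring a common $\epsilon^{1/3}$ out of both the numerator bracket and the denominator bracket and cancelling it leaves exactly the right-hand side of \eqref{eq:dp_49}, with the $\Ai^\prime$ terms picking up the residual factor $\epsilon^{1/3}$. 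Since $\alpha$, $\beta$, $d$, $z_1$ and $z_2$ are unchanged between the two lemmas, this part is purely algebraic.

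The step requiring genuine care is the justification that the quotient of two asymptotic equivalences is again an asymptotic equivalence. Writing $H(t)=A(t,\epsilon)\bigl(1+o(1)\bigr)$ and $H(qt)=B(t,\epsilon)\bigl(1+o(1)\bigr)$ as $\epsilon\to 0^+$, where $A$ and $B$ denote the right-hand sides of \eqref{eq:dp_45} and \eqref{eq:dp_48}, I would observe that
\[
\frac{H(qt)}{H(t)}=\frac{B}{A}\,\frac{1+o(1)}{1+o(1)}=\frac{B}{A}\bigl(1+o(1)\bigr),
\]
which is legitimate only if $A\neq 0$ in the relevant regime, i.e.\ if the denominator Airy combination of \eqref{eq:dp_45} (equivalently $H(t)$ itself) is nonvanishing. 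The hard part is thus to confirm this non-vanishing and to ensure it holds uniformly in $t$ across a neighbourhood of the tricritical point, where $\alpha(t)\to 0$ and the two saddles coalesce.

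To control the neighbourhood of $t_c$ I would verify that the apparent degeneracies cancel in the ratio. At $t=t_c$ one has $d=0$, $\alpha=0$ and $z_1=z_2=1/2$, so the factor $(\alpha d)^{-1/4}$ diverges but cancels, while the $\Ai^\prime$ terms vanish because $z_1^{1/2}=z_2^{1/2}$ and $z_1^{3/2}=z_2^{3/2}$; the surviving $\Ai$ terms in numerator and denominator both carry $\alpha^{1/2}\Ai(\alpha\epsilon^{-2/3})$, which cancels to leave a finite value. Because the expansions of Lemmas~\ref{Lemma:dp_05} and~\ref{Lemma:dp_06} were obtained through the coalescing–saddle–point transformation \eqref{eq:dp_37}, which is regular and one-to-one on a domain containing $t_c$, they are valid uniformly for $t$ in such a domain, and this uniformity transfers to the quotient once the denominator is shown to be nonzero there. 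Collecting these observations yields \eqref{eq:dp_49} with $z_1$, $z_2$ and $\alpha(t)$ as stated, which completes the proof.
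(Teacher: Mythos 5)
Your proposal matches the paper's proof: the paper's entire argument for Proposition \ref{prop:dp_01} is the single line ``Combining Lemmas \ref{Lemma:dp_05} and \ref{Lemma:dp_06}, we arrive at'', i.e.\ exactly your step of forming $G(t,q)=H(qt)/H(t)$ via \eqref{eq:dp_07}, cancelling the common prefactor $\frac{(q;q)_\infty}{2\pi i}(\alpha d)^{-1/4}e^{\beta/\epsilon}$, and factoring out $\epsilon^{1/3}$ from both brackets. Your additional care about the legitimacy of dividing asymptotic equivalences (non-vanishing of the denominator, uniformity near $t_c$) goes beyond what the paper records and is sound in substance, though your description of the $t\to t_c$ degeneracy is slightly off --- the $\Ai'$ coefficients and the $\alpha^{1/2}\Ai$ coefficients both vanish like $\sqrt{1-4t}$, so after cancelling this common factor the $\Ai'$ terms survive in the ratio (they produce precisely the $\epsilon^{1/3}$ correction seen in \eqref{eq:dp_52}) rather than dropping out.
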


\begin{figure}[ht]
%\centering \epsfig{file=uniform_asymptotics_q_099.eps,width=0.7\textwidth}
\centering \includegraphics[width=0.7\textwidth]{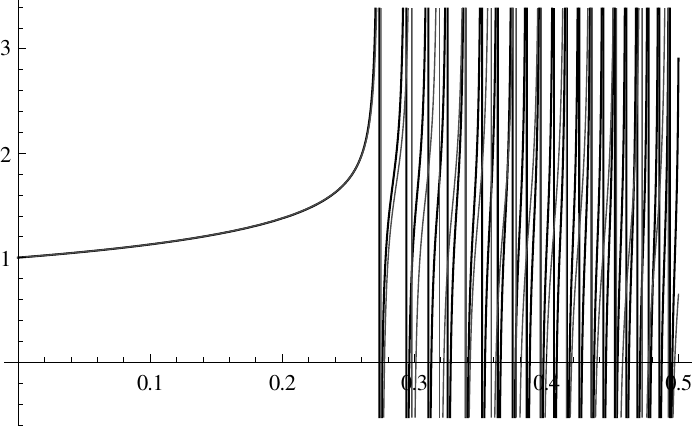}
\caption{Plot of $G(t,q)$ (black) against the uniform asymptotic expression \eqref{eq:dp_49} (grey) for $\epsilon=10^{-2}$ and $t$ ranging between 0 and 1/2 (horizontal axis).}
\label{fig:dp_04}
\end{figure}
One easily shows that for $t\leq t_c$ and $q\to 1^-$, $G(t,q)$ tends towards the generating function of the Catalan numbers,
\be 
C(t) = \f{1}{2t}\left(1-\sqrt{1-4t}\right).
\ee 
This is consistent with the well-known result for the generating function of unweighted Dyck paths. By applying Dini's theorem, one can further show that the convergence is \emph{uniform} for $t\in[\,0,t_c\,]$. However, (\ref{eq:dp_49}) is also valid for $t>t_c$, though not in a uniform sense due to the occurence of poles in the denominator. This fact is illustrated in \fig{fig:dp_04}, where we have plotted $G(t,q)$ against the uniform asymptotic expression \eqref{eq:dp_49} for $\epsilon=10^{-2}$ and $t \in [0,1/2]$. \\

\noindent Defining the tricritical scaling function
\be 
F(s):=\f{\Ai^{\prime}(s)}{\Ai(s)},
\ee
we can conclude
\begin{corollary} 
For fixed $s=(1-4t)\cd(1-q)^{-\phi}$ and $q\to 1^-$,
\be 
	G\left(t,q\right) \sim 2\left[1+(1-q)^{\disp{-\gamma_0}} F((1-4t)(1-q)^{-\disp\phi})\right],
\label{eq:dp_51}
\ee
where the critical exponents are $\phi=2/3$ and $\gamma_0=-1/3$.
\label{Corollary:dp_02}
\end{corollary}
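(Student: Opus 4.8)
The plan is to extract Corollary \ref{Corollary:dp_02} from Proposition \ref{prop:dp_01} by passing to the tricritical scaling limit, in which $q\to1^-$ while $s=(1-4t)(1-q)^{-2/3}$ is held fixed. The first step is to track the elementary asymptotic relations between the quantities appearing in \eqref{eq:dp_49}. Since $\epsilon=-\ln q\sim 1-q$ as $q\to1^-$, one has $\epsilon^{-2/3}\sim(1-q)^{-2/3}$, and because $\alpha(t)\sim 1-4t=:d$ as $t\to t_c$ (as noted after \eqref{eq:dp_40}), the argument of the Airy functions satisfies $\alpha\epsilon^{-2/3}\to s$. Hence $\Ai(\alpha\epsilon^{-2/3})\to\Ai(s)$ and $\Ai'(\alpha\epsilon^{-2/3})\to\Ai'(s)$, and these may be treated as fixed numbers. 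Note that holding $s$ fixed forces $d\to0$, i.e. $t\to t_c$, which is exactly the regime in which the saddle-point expansions behind \eqref{eq:dp_49} operate.

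Next I would expand the prefactors $z_1,z_2$ about the coalescence point. Writing $z_{1,2}=\tfrac12(1\pm\sqrt d)$ and expanding in powers of $\sqrt d$ gives
\[
z_1^{1/2}+z_2^{1/2}\sim\sqrt2,\quad z_1^{1/2}-z_2^{1/2}\sim\tfrac{1}{\sqrt2}\sqrt d,\quad z_1^{3/2}+z_2^{3/2}\sim\tfrac{1}{\sqrt2},\quad z_1^{3/2}-z_2^{3/2}\sim\tfrac{3}{2\sqrt2}\sqrt d,
\]
together with $\alpha^{1/2}\sim\sqrt d$. The crucial feature is that the two ``symmetric'' combinations tend to nonzero constants while the two ``antisymmetric'' combinations vanish like $\sqrt d$; since $\sqrt d\sim s^{1/2}\epsilon^{1/3}$ is of the same order as the explicit $\epsilon^{1/3}$ multiplying the $\Ai'$ terms in \eqref{eq:dp_49}, the $\Ai'$ contributions survive precisely at first subleading order.

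Substituting into \eqref{eq:dp_49} and factoring the common leading factor $\alpha^{1/2}\sim\sqrt d$ out of both numerator and denominator, that factor cancels in the quotient and one is left with
\[
G(t,q)\sim\frac{\sqrt2\,\Ai(s)-\tfrac{1}{\sqrt2}\epsilon^{1/3}\Ai'(s)}{\tfrac{1}{\sqrt2}\Ai(s)-\tfrac{3}{2\sqrt2}\epsilon^{1/3}\Ai'(s)}=2\,\frac{1-\tfrac12\epsilon^{1/3}F(s)}{1-\tfrac32\epsilon^{1/3}F(s)},
\]
where $F(s)=\Ai'(s)/\Ai(s)$. Expanding the remaining quotient to first order in $\epsilon^{1/3}$ via $(1-x)/(1-y)=1-x+y+O(\epsilon^{2/3})$ collapses the two correction terms to one, yielding $G(t,q)\sim2\bigl(1+\epsilon^{1/3}F(s)\bigr)$. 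Finally, $\epsilon^{1/3}\sim(1-q)^{1/3}=(1-q)^{-\gamma_0}$ identifies $\phi=2/3$ and $\gamma_0=-1/3$ and reproduces \eqref{eq:dp_51}.

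The main obstacle is that the leading terms of numerator and denominator both vanish like $\sqrt d$, so the limit is genuinely of $0/0$ type, and the statement concerns not the leading value $2$ but its order-$\epsilon^{1/3}$ correction. The delicate bookkeeping therefore lies in justifying that everything neglected truly enters at order $\epsilon^{2/3}$ or higher: one must check that the subleading terms of the $z_i$-expansions contribute at relative order $d\sim\epsilon^{2/3}$, and that the refinements $\alpha=d(1+O(d))$ and $\epsilon=(1-q)(1+O(1-q))$ likewise do not feed into the retained order. One should also record that the manipulation presupposes $\Ai(s)\neq0$; the zeros of $\Ai$ are exactly the poles responsible for the non-uniform behaviour for $t>t_c$ noted after \eqref{eq:dp_49}, consistent with the corollary being a pointwise-in-$s$ asymptotic.
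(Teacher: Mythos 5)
Your proposal is correct, and it follows exactly the route the paper intends (the paper states Corollary \ref{Corollary:dp_02} without an explicit proof, as an immediate consequence of Proposition \ref{prop:dp_01}): substitute the expansions of $\alpha^{1/2}$ and the $z_i$-combinations in powers of $\sqrt{d}$ into \eqref{eq:dp_49}, cancel the common factor $\sqrt{d}$, and expand the resulting quotient to first order in $\epsilon^{1/3}\sim(1-q)^{1/3}$. Your sign bookkeeping, the identification $\alpha\epsilon^{-2/3}\to s$, and the remark that the neglected corrections enter at relative order $\epsilon^{2/3}$ are all accurate, so nothing further is needed.
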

\noindent In particular,
\be 
G(t_c,q) \sim 2\cd\left(1+A_0\cd(1-q)^{\disp-\gamma_0}\right)\qquad(q\to 1^-),
\label{eq:dp_52}
\ee 
where $A_0=\Ai^{\prime}(0)/\Ai(0)=-0.72\dots .$\\

Both \eqref{eq:dp_49} and \eqref{eq:dp_51} can be rearranged in order to obtain an asymptotic expression for $F(s)$. In \fig{fig:dp_05}, we have plotted $-2\,F(s)$ against the expressions obtained from \eqref{eq:dp_49} and \eqref{eq:dp_51} for $\epsilon=10^{-3},~10^{-4}$ and $10^{-5}$. For fixed value of $q$, \eqref{eq:dp_49} provides a more accurate approximation than \eqref{eq:dp_51}. 
\begin{figure}[htb]
\centering   \subfigure[]{
    \label{fig:dp_05a}
\includegraphics[width=0.7\textwidth]{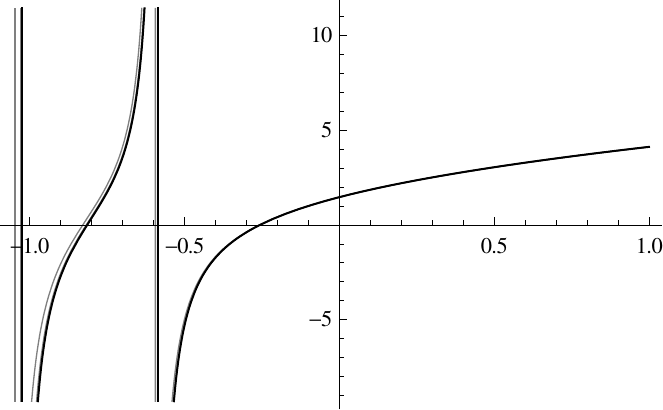}
  }\hspace{0.5cm}
\centering   \subfigure[]{
    \label{fig:dp_05b}
\centering \includegraphics[width=0.7\textwidth]{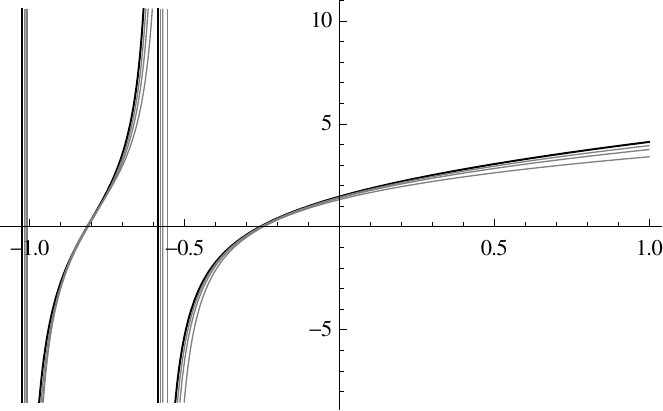}
}
  \caption{Plot of the scaling function $F(s)$ (black) against the asymptotic expressions \eqref{eq:dp_49} (a) and \eqref{eq:dp_51} (b) (grey) for $\epsilon=10^{-3},~10^{-4}$ and $10^{-5}$ (the smallest value of $\epsilon$ corresponds to the closest approximation).}
  \label{fig:dp_05}
\end{figure}

\subsection{Finite size scaling and specific heat}

Given the observed scaling behaviour \eqref{eq:dp_51} of the generating function, we now aim to calculate an expression for the finite size scaling function of the fixed-area partition function \eqref{eq:dp_015}. For this purpose, we first derive a further asymptotic expression for $G(t,q)$.\\

\noindent We define the \emph{singular part} of $G(t,q)$ as
\be 
G_{sing}(t,q) := G(t,q)-\f{1}{2t}
\ee
and prove
\begin{proposition}
Let $I=[t_0,t_c]$, where $0<t_0\leq t_c$. Then
\be 
G_{sing}(t,q) \sim \f{1}{2t}(1-q)^{-\disp\gamma_0}F((1-4t)(1-q)^{-\disp\phi})\quad(q\to 1^-),
\label{eq:dp_53}
\ee
uniformly for $t\in I$.
\label{prop:dp_02}
\end{proposition}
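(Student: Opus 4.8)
The plan is to start from the uniform expression \eqref{eq:dp_49} of Proposition \ref{prop:dp_01} and to extract the singular part by an exact algebraic rearrangement, reducing the claim to a single uniform statement about the Airy quotient $F$. Writing \eqref{eq:dp_49} as $G=N/D$ and using the elementary identities $z_1+z_2=1$, $z_1z_2=t$ and $z_1-z_2=\sqrt{d}$ with $d=1-4t$, I would compute $2t\,N-D$. The symmetric combinations collapse: the coefficient of $\Ai(X)$ becomes $\sqrt{d}\,\alpha^{1/2}(z_2^{3/2}-z_1^{3/2})$ and that of $\epsilon^{1/3}\Ai'(X)$ becomes $\sqrt{d}\,(z_1^{3/2}+z_2^{3/2})$, so that
\[
G_{sing}(t,q)=\frac{\sqrt{d}}{2t}\cdot\frac{\alpha^{1/2}r\,\Ai(X)+\epsilon^{1/3}\Ai'(X)}{\alpha^{1/2}\Ai(X)+r\,\epsilon^{1/3}\Ai'(X)},\qquad X=\alpha\epsilon^{-2/3},
\]
where $r=(z_2^{3/2}-z_1^{3/2})/(z_1^{3/2}+z_2^{3/2})$. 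This expression is exact given \eqref{eq:dp_49}; note it has a removable $0/0$ singularity at $t=t_c$, where $d=\alpha=r=0$, which must be resolved by a limit.

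Dividing through by $\alpha^{1/2}\Ai(X)$ puts this in the M\"obius form $G_{sing}=\frac{\sqrt d}{2t}\,\frac{W+r}{1+rW}$ with $W=\epsilon^{1/3}\alpha^{-1/2}F(X)$. On $I$ one has $\alpha\ge 0$, $r\le 0$ and $W\le 0$ (since $\Ai(X)>0$ and $\Ai'(X)<0$ for $X\ge 0$), so $1+rW\ge 1$ and no spurious poles arise. The target right-hand side $T(t,q):=\frac{1}{2t}(1-q)^{-\gamma_0}F(s)$, $s=(1-4t)(1-q)^{-\phi}$, is precisely what $G_{sing}$ reduces to after the four replacements $\alpha\mapsto d$, $\epsilon\mapsto 1-q$, $r\mapsto 0$ and $\sqrt d/\sqrt\alpha\mapsto 1$, so the whole proof amounts to controlling these four approximations uniformly in $t\in I$.

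I would carry this out by distinguishing two regimes. First, for $t$ in a compact subinterval bounded away from $t_c$ we have $X=\alpha\epsilon^{-2/3}\to\infty$, so by the large-argument asymptotics $F(X)\sim-\sqrt X$ one gets $W\to-1$; since $-1$ is a fixed point of $W\mapsto(W+r)/(1+rW)$ for every $r$, the M\"obius factor also tends to $-1$ and $G_{sing}\sim-\sqrt d/(2t)$, which coincides with $T$ because $F(s)\sim-\sqrt s$ makes $T\sim-\sqrt d/(2t)$ as well; crucially $\alpha$ cancels here, so its deviation from $d$ is irrelevant. Second, for $t$ near $t_c$ with $s$ bounded I would use $\alpha\sim d$, $\epsilon\sim 1-q$ and the expansion $r\sim-\frac{3}{2}\sqrt d\to 0$, together with continuity of $\Ai$ and $\Ai'$, to obtain $X\to s$, $W\to F(s)/\sqrt s$ and hence $G_{sing}\sim\frac{\sqrt d}{2t}\,\frac{F(s)}{\sqrt s}=\frac{1}{2t}\epsilon^{1/3}F(s)$, matching $T$. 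The passage $\epsilon\leftrightarrow 1-q$ in the argument is harmless because the scaling combination $\epsilon^{1/3}F(d\epsilon^{-2/3})$ is, to leading order, invariant under rescaling $\epsilon$ by $1+O(1-q)$: for bounded argument by continuity, for large argument because $F(\lambda s)\sim\lambda^{1/2}F(s)$.

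To obtain the \emph{uniform} statement I would phrase everything as $\sup_{t\in I}|G_{sing}(t,q)/T(t,q)-1|\to 0$ and estimate this supremum by splitting $I=[t_0,t_c-\delta]\cup[t_c-\delta,t_c]$, choosing $\delta$ so the two one-sided estimates overlap; the removable singularity at $t_c$ is handled by continuity in $t$, its value there matching Corollary \ref{Corollary:dp_02}. The hypothesis $t_0>0$ enters exactly to keep $1/(2t)$ bounded and to keep $r$ bounded away from $-1$ (one has $r\to-1$ as $t\to 0$), so that all prefactors and error terms stay controlled. The main obstacle is precisely this uniformity across the crossover near $t_c$: there the leading term itself vanishes like $\sqrt d$, the rearranged expression is an indeterminate $0/0$, and the scaling variable $s$ interpolates between $O(1)$ and $\infty$, so one must show that the errors from the large-argument Airy expansion, from $\alpha/d-1$, from $r$, and from $\epsilon/(1-q)-1$ are all $o(1)$ \emph{relative to} the vanishing leading term, simultaneously and uniformly. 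Verifying that these relative errors stay uniformly small through the transition region (large $X$ but $t$ still close to $t_c$) is the delicate step; everything else is routine.
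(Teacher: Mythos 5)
Your algebra is correct (the collapse of $2tN-D$ via $z_1+z_2=1$, $z_1z_2=t$ checks out, as does the M\"obius form), and the two-regime matching plan is sensible, but the proof has a genuine gap at its very first step: the claim that the rearranged expression is ``exact given \eqref{eq:dp_49}''. Proposition \ref{prop:dp_01} is only an asymptotic equivalence, $G=(N/D)(1+o(1))$, and asymptotic equivalence is not preserved under a subtraction that cancels the leading term. Near $t_c$ one has $G\to 2$ and $1/(2t)\to 2$, so $G_{sing}=G-1/(2t)$ is a difference of two $O(1)$ quantities that is itself only of size $O((1-q)^{1/3})$ (cf.\ \eqref{eq:dp_52}). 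The multiplicative $o(1)$ error in \eqref{eq:dp_49} therefore turns into an additive error of size roughly $2\,o(1)$ in $G_{sing}$, which can completely swamp the $O((1-q)^{1/3})$ quantity you are trying to resolve; nothing in Proposition \ref{prop:dp_01} as stated guarantees its error is $o((1-q)^{1/3})$. Your closing list of error sources (large-argument Airy expansion, $\alpha/d-1$, $r$, $\epsilon/(1-q)-1$) omits precisely this dominant one. To repair the argument you would need a strengthened form of Lemmas \ref{Lemma:dp_05}--\ref{Lemma:dp_06} with explicit uniform relative error $O(\epsilon)$ on both the $\Ai$ and $\Ai'$ coefficients (which Chester--Friedman--Ursell theory supplies through the full expansion \eqref{eq:H_series}), and then carry those error terms through the cancellation in $2tN-D$; that is a substantive missing piece. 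In addition, the crossover uniformity that you yourself defer as ``the delicate step'' is exactly the part that is never actually carried out, so even granting the stronger input the proposal remains a plan rather than a proof in the critical region.

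For contrast, the paper's proof never touches \eqref{eq:dp_49} and is much softer: it applies Dini's theorem separately to each side of \eqref{eq:dp_53}. Each side is monotone in $q$ (for the LHS by positivity of the coefficients $c_{m,n}$), converges pointwise on the compact interval $I$ to the continuous function $-\sqrt{1-4t}/(2t)$ (for the RHS via $F(s)\sim -\sqrt{s}$), and hence converges uniformly; having the same uniform limit, the two sides are uniformly asymptotic to each other. Note also that your stated target $\sup_{t\in I}\left|G_{sing}/T-1\right|\to 0$ is strictly stronger than what this Dini argument delivers (uniform smallness of the difference), and near $t_c$, where both sides vanish, that stronger ratio statement certainly cannot be extracted from \eqref{eq:dp_49} alone.
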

\begin{proof}
By using that $F(s) \sim -\sqrt{s}$ for $s\to\infty$, one easily sees that for $t\in I$, the RHS of \eqref{eq:dp_53} converges in a pointwise sense towards $C_{sing}:=-\sqrt{1-4t}/2t$, which is a continuous function on $I$. On also proves easily that for all $t\in I$, the RHS of \eqref{eq:dp_53} decreases monotonically with $q$. It therefore follows from Dini's theorem that the RHS converges uniformly to $C_{sing}(t)$. It is also clear that the same holds for the LHS, $G_{sing}(t,q)$. Hence, both sides of \eqref{eq:dp_53} have the same uniform asymptotic expression and are therefore uniformly asymptotic to each other.

\end{proof}

\noindent It is possible to use the Hadamard product expression \cite{Flajolet01}
\be 
\Ai(s)=\Ai(0)\exp(-A_0 s)\prod_{k=1}^\infty\left(1-\f{s}{s_k}\right),
\label{eq:dp_54}
\ee
\noindent where $s_k$ is the $k$th zero of the Airy function. Taking the derivative of the logarithm of this expression and exercising some careful analysis we get
\be 
F(s)=-\f{1}{s}\sum_{j=1}^\infty Z(j)s^j,
\label{eq:dp_55}
\ee 
where we have used the Airy Zeta function
\be 
Z(j) =\sum_{k=1}^\infty \left(\f{1}{s_k}\right)^j
\ee  
and inserted the conjectured value $Z(1)=-\Ai^\prime(0)/\Ai(0)$~\cite{Crandall96}.\\%=\disp{\f{1}{2\pi}}3^{5/6}\Gamma(\f{2}{3})^2$

\noindent Inserting \eqref{eq:dp_55} into \eqref{eq:dp_53}, we obtain that for $m\to\infty$,
\be
G_{sing}(t,q) &\sim& -\sum_{j=0}^\infty Z(j+1)(1-4t)^{j}(1-q)^{-2/3j+1/3}\nonumber\\
&=&-\sum_{m=0}^\infty\sum_{j=0}^\infty Z(j+1)(1-4t)^{j} \left({m+\f23 j-\f43}\atop{m}\right)q^m.
\label{eq:dp_56}
\ee
For $n\in \mathbb{N}$ and $\alpha\in\mathbb{C}\setminus\mathbb{Z}_{\leq 0}$, theorem VI.1 from \cite{Flajolet09} states that
\be
[z^n](1-z)^{-\alpha} = \left({n+\alpha-1}\atop{n}\right)\sim \f{n^{\alpha-1}}{\disp\Gamma(\alpha)}\left(1+\f{\alpha(\alpha-1)}{2n}\right).
\label{eq:dp_57}
\ee
\noindent Inserting the leading order of this expansion into \eqref{eq:dp_56} and extracting the $m$th coefficient, one formally arrives at
\be 
Q_m(t)\sim m^{-4/3}\sum_{j=0}^\infty \f{Z(j+1)}{\Gamma(\f23 j-\f13)}m^{2/3j}(1-4t)^j\quad(m\to\infty).
\label{eq:dp_58}
\ee
\noindent Defining the finite size scaling function
\BE
\phi(s):= \sum_{j=0}^\infty \f{Z(j)}{\Gamma(\f23 j-\f13)} s^j,
\label{eq:dp_59}
\EE
we can rewrite \eqref{eq:dp_58} as
\be 
Q_m(t) \sim m^{-4/3} \phi((1-4t)m^{2/3}).
\label{eq:dp_60}
\ee 
This expression is of the generic form expected for models which exhibit tricritical scaling \cite{Brak95,VanRensburg00}.\\

\section{Conclusion}

We have calculated an asymptotic expression for the generating function of Dyck paths, weighted with respect to both their perimeter and their area in the limit of the area generating variable tending towards 1. The result is valid uniformly for a range of values of the perimeter generating variable, including the tricritical point.

In the limit of both the perimeter and the area generating variable tending towards their critical values, we have shown the existence of a scaling function, expressible via Airy functions and their derivatives. The same type of scaling expression has been proven before to hold in the case of staircase polygons \cite{Prellberg95}.

Note in particular that the scaling function is obtained as a particular limit of the uniform asymptotic expansion. This is in contrast to the behaviour found in \cite{Prellberg95}, where the scaling function and the uniform asymptotic expansion are related by a local variable transformation. In \cite{Brak95}, uniform asymptotic expansions for tri-critical phase transitions were also constructed from scaling functions using such transformations.

From the scaling function of the generating function we derived an expression for the scaling function of the finite-area partition function.

\end{document}